\def\<{\langle}
\def\>{\rangle}
\newcommand{\Tr}{\mathrm{Tr}}
\def\oper{{\mathchoice{\rm 1\mskip-4mu l}{\rm 1\mskip-4mu l}
{\rm 1\mskip-4.5mu l}{\rm 1\mskip-5mu l}}}
\DeclareMathAlphabet\mathbfcal{OMS}{cmsy}{b}{n}
\mathchardef\mhyphen="2D 
\newtheorem{Remark}{Remark}
\newtheorem{Proposition}{Proposition}
\newtheorem{Example}{Example}
\begin{document}

\title{Entanglement witnesses from mutually unbiased measurements}

\author{Katarzyna Siudzi\'{n}ska and Dariusz Chru\'{s}ci\'{n}ski}
\affiliation{Institute of Physics, Faculty of Physics, Astronomy and Informatics \\  Nicolaus Copernicus University, ul. Grudzi\k{a}dzka 5/7, 87--100 Toru\'{n}, Poland}

\begin{abstract}
A new family of positive, trace-preserving maps is introduced. It is defined using the mutually unbiased measurements, which generalize the notion of mutual unbiasedness of orthonormal bases. This family allows one to define entanglement witnesses whose indecomposability depends on the characteristics of the associated measurement operators.
\end{abstract}

\flushbottom

\maketitle

\thispagestyle{empty}

\section{Introduction}

Entanglement is an important quantum resource used in quantum information theory, quantum communication, quantum cryptography, and quantum computation \cite{HHHH}. Therefore, distinguishing between separable and entangled states is of utmost importance. A quantum state represented by a density operator $\rho$ on the Hilbert space $\mathcal{H}_1\otimes\mathcal{H}_2$ is separable if and only if it can be decomposed into $\rho = \sum_k p_k \rho^{(1)}_k \otimes \rho^{(2)}_k$, where $\rho^{(1)}_k$ and $\rho^{(2)}_k$ are density operators of two subsystems and $p_k$ is a probability distribution. It turns out that  $\rho$ is separable if and only if $(\oper\otimes\Phi)[\rho]\geq 0$ for any positive map $\Phi$ \cite{Horodeccy}. Any quantum state that violates this condition is therefore entangled. Every entangled state can be detected via an entanglement witness \cite{Terhal1,Terhal2}, which is a block-positive operator with at least one negative eigenvalue. Entanglement witnesses $W$ are related to positive but not completely positive maps $\Phi$ via the Choi-Jamio{\l}kowski isomorphism,
\begin{equation}\label{W}
W=\sum_{i,j=0}^{d-1}|i\>\<j|\otimes\Phi[|i\>\<j|],
\end{equation}
where $|k\>$ is an orthonormal basis in $\mathcal{H}\simeq\mathbb{C}^d$. Any such bipartite operator is Hermitian and positive on separable states ($\< \psi \otimes \phi|W|\psi \otimes \phi\>\geq 0$). Entanglement witnesses define a universal mathematical tool to analyze quantum entangled states. For any entangled state $\rho$, there exists an entanglement witness $W$ (which is not unique) such that $\Tr(W \rho ) <0$  (cf. \cite{Guhne,TOPICAL} for recent reviews). 

There are several proposals for constructing such operators. A special class consists in decomposable witnesses, which can be represented as $W=A+B^\Gamma$, where $A,B \geq 0$ and $B^\Gamma$ denotes a partial transposition. However, such witnesses cannot be used to detect bound entanglement; i.e., entangled states that are PPT (positive under partial transposition). Construction of indecomposable witnesses is notoriously hard. There is one class related to the well-known realignment or computable cross-norm (CCNR) separability criterion \cite{R1,R2,R3}, and covariance matrix criterion \cite{CMM1,CMM2,CMM3}. These were recently generalized in Refs. \cite{ESIC-Darek,ESIC-Darek2}. Another interesting class of indecomposable witnesses is constructed in terms of mutually unbiased bases \cite{Spengler,MUBs,Junu1,Junu2,Durt}. This approach was then generalized to mutually unbiased measurements (MUMs) \cite{Kalev} in Refs. \cite{EW-SIC,EW-2MUB}.

In this paper, we further develop the construction of entanglement witnesses (in particular, indecomposable ones) in terms of MUMs, which generalize the notion of mutually unbiased bases (MUBs) to non-projective operators. Contrary to MUBs, of which there are $d+1$ for power prime dimensions $d$ \cite{Wootters,Ivonovic} and at least three otherwise \cite{MUB-2}, one can always construct the maximal number of $d+1$ MUMs. The applications of mutually unbiased measurements range from entropic uncertainty relations \cite{ChenFei,Rastegin,Rastegin2}, through separability criteria for $d$-dimensional bipartite \cite{ChenMa,Shen,ShenLi} and multipartite \cite{Liu,ChenLi} systems, to $k$-nonseparability detection of multipartite qudit systems \cite{Liu3}, and finally to entanglement detection of bipartite states \cite{MUM_purity}. Also, Li et al. used the MUMs to introduce new positive quantum maps and entanglement witnesses \cite{Li} that generalize the construction from Ref. \cite{MUBs}.

In the following sections, we recall the definition of mutually unbiased measurements as well as the method of their construction. Next, we use a set of $N\leq d+1$ MUMs to introduce a family of positive, trace-preserving maps and the corresponding entanglement witnesses. We illustrate our results with several examples. By using mutually unbiased measurement to construct orthonormal Hermitian bases, we prove that these witnesses do not depend on the parameter $\kappa$ that characterizes the MUM properties. However, it turns out that there exists a relation between indecomposability of witnesses and the optimal value of $\kappa$. Finally, we show how our family of entanglement witnesses is related to a large class of witnesses based on the CCNR separability criterion \cite{YuLiu}.

\section{Mutually unbiased measurements}

In quantum information theory, any measurement is represented by a positive, operator-valued measure (POVM) $\{E_\alpha\,|\,E_\alpha\geq 0,\,\sum_\alpha E_\alpha=\mathbb{I}_d\}$. The probability of obtaining the outcome labeled by $\alpha$ is $p_\alpha=\Tr(E_\alpha\rho)$, where $\rho$ is a density operator. A special class of POVMs consists in measurement operators that are orthogonal projectors. Symmetric projective measurements can be performed using mutually unbiased bases. Recall that orthonormal bases $\{\psi_k^{(\alpha)},k=0,\ldots,d-1\}$ in $\mathbb{C}^d$, numbered by $\alpha=1,\ldots,N$, are mutually unbiased if and only if $|\<\psi_k^{(\alpha)}|\psi_l^{(\beta)}\>|^2=1/d$ for $\alpha\neq\beta$. Now, the corresponding rank-1 projectors $P_k^{(\alpha)}=|\psi_k^{(\alpha)}\>\<\psi_k^{(\alpha)}|$, which satisfy the properties
\begin{equation}\label{MUB}
\begin{split}
&\Tr(P_k^{(\alpha)})=1,\\
&\Tr(P_k^{(\alpha)}P_l^{(\beta)})=\delta_{\alpha\beta}\delta_{kl}+\frac 1d (1-\delta_{\alpha\beta}),
\end{split}
\end{equation}
forms a set of $N$ mutually unbiased projective measurements. Kalev and Gour  generalized this notion to non-projective measurement operators \cite{Kalev}. Indeed, the measurements $\{P_k^{(\alpha)}|P_k^{(\alpha)}\geq 0,\sum_{k=0}^{d-1}P_k^{(\alpha)}=\mathbb{I}_d\}$ are mutually unbiased if and only if
\begin{equation}\label{MUM}
\begin{split}
&\Tr(P_k^{(\alpha)})=1,\\
&\Tr(P_k^{(\alpha)}P_l^{(\beta)})=\frac 1d +\frac{d\kappa-1}{d-1}\delta_{\alpha\beta}\left(\delta_{kl}
-\frac 1d \right),
\end{split}
\end{equation}
where $1/d<\kappa\leq 1$. For $\kappa=1$, one reproduces eq. (\ref{MUB}). The maximal number of $d+1$ MUMs forms an informationally complete set and can be constructed using an orthonormal basis $\{\mathbb{I}_d/\sqrt{d},G_{\alpha,k}\}$ of traceless Hermitian operators $G_{\alpha,k}$. The relation between $G_{\alpha,k}$ and $P_k^{(\alpha)}$ is given by the formula
\begin{equation}\label{mum}
P_k^{(\alpha)}=\frac 1d \mathbb{I}_d+tF_k^{(\alpha)},
\end{equation}
where
\begin{equation}\label{f}
F_k^{(\alpha)}=\left\{\begin{aligned}
\sum_{l=1}^{d-1}G_{\alpha,l}-\sqrt{d}(\sqrt{d}+1)G_{\alpha,k},&\qquad k\neq 0,\\
(\sqrt{d}+1)\sum_{l=1}^{d-1}G_{\alpha,l},&\qquad k=0.
\end{aligned}\right.
\end{equation}
The parameter $t$ relates to $\kappa$ via
\begin{equation}\label{t}
\kappa=\frac 1d +(d-1)t^2(1+\sqrt{d})^2,
\end{equation}
and it is chosen in such a way that $P_k^{(\alpha)}\geq 0$. The optimal value $\kappa_{\rm opt}$ is the highest possible value of $\kappa$ for which the condition $P_k^{(\alpha)}\geq 0$ holds. Notably, $\kappa_{\rm opt}$ depends on the choice of the operator basis $G_{\alpha,k}$. For example, $\kappa_{\rm opt}=\frac{d+2}{d^2}$ for the Gell-Mann matrices (see Appendix A), and $\kappa_{\rm opt}=1$ for the basis that gives rise to the mutually unbiased basis.

\section{Positive maps and entanglement witnesses}

Let us consider the trace-preserving map
\begin{equation}\label{PTP}
\Phi=\frac{1}{d\kappa-1}\left[(d\kappa-1-N+2L)\Phi_0+\sum_{\alpha=L+1}^N\Phi_\alpha
-\sum_{\alpha=1}^L\Phi_\alpha\right],
\end{equation}
where $\Phi_0[X]=\mathbb{I}_d\Tr(X)/d$ is the completely depolarizing channel and
\begin{equation}
\Phi_\alpha[X]=\sum_{k,l=0}^{d-1}\mathcal{O}^{(\alpha)}_{kl}P_k^{(\alpha)}\Tr(P_l^{(\alpha)}X).
\end{equation}
The maps $\Phi_\alpha$ are constructed from mutually unbiased measurements $P_k^{(\alpha)}$ and orthogonal rotations $\mathcal{O}^{(\alpha)}$ that preserve the  vector $\mathbf{n}_\ast=(1,\ldots,1)/\sqrt{d}$. Note that the greater the value of $L$ (i.e., the more we subtract), the higher the coefficient that stands before the identity operator $\mathbb{I}_d$.

\begin{Proposition}
The trace-preserving map defined by eq. (\ref{PTP}) is positive.
\end{Proposition}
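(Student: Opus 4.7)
Since $\Phi$ is linear, it suffices to prove $\Phi[|\psi\rangle\langle\psi|] \ge 0$ for every unit vector $|\psi\rangle$, i.e.\ to show that $\langle\phi|\Phi[|\psi\rangle\langle\psi|]|\phi\rangle \ge 0$ for all unit $|\phi\rangle$. The first step is to introduce the probability vectors
\begin{equation*}
a^{(\alpha)}_k = \langle\phi|P^{(\alpha)}_k|\phi\rangle, \qquad b^{(\alpha)}_k = \langle\psi|P^{(\alpha)}_k|\psi\rangle,
\end{equation*}
each of which sums to one, and to rewrite $\langle\phi|\Phi_\alpha[|\psi\rangle\langle\psi|]|\phi\rangle = (\mathbf{a}^{(\alpha)})^{T}\mathcal{O}^{(\alpha)}\mathbf{b}^{(\alpha)}$. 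The completely depolarizing part contributes simply $(d\kappa-1-N+2L)/(d(d\kappa-1))$.

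The key idea is to exploit $\mathcal{O}^{(\alpha)}\mathbf{n}_\ast = \mathbf{n}_\ast$. Writing $\mathbf{a}^{(\alpha)} = \mathbf{n}_\ast/\sqrt{d} + \tilde{\mathbf{a}}^{(\alpha)}$ and similarly for $\mathbf{b}^{(\alpha)}$, with $\tilde{\mathbf{a}}^{(\alpha)},\tilde{\mathbf{b}}^{(\alpha)}\perp \mathbf{n}_\ast$, the cross terms vanish and one finds $(\mathbf{a}^{(\alpha)})^{T}\mathcal{O}^{(\alpha)}\mathbf{b}^{(\alpha)} = 1/d + (\tilde{\mathbf{a}}^{(\alpha)})^{T}\mathcal{O}^{(\alpha)}\tilde{\mathbf{b}}^{(\alpha)}$. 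The constant $1/d$ pieces, weighted by $\varepsilon_\alpha = \pm 1$, sum to $(N-2L)/d$, which combines with the depolarizing contribution to give exactly $(d\kappa-1)/d$, so that the coefficients involving $N$ and $L$ cancel and
\begin{equation*}
\langle\phi|\Phi[|\psi\rangle\langle\psi|]|\phi\rangle = \frac{1}{d} + \frac{1}{d\kappa-1}\sum_{\alpha=1}^{N}\varepsilon_\alpha\,(\tilde{\mathbf{a}}^{(\alpha)})^{T}\mathcal{O}^{(\alpha)}\tilde{\mathbf{b}}^{(\alpha)}.
\end{equation*}

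The remaining task is to bound the second term. Two successive applications of the Cauchy--Schwarz inequality together with the orthogonality of $\mathcal{O}^{(\alpha)}$ give
\begin{equation*}
\Big|\sum_{\alpha=1}^{N}\varepsilon_\alpha (\tilde{\mathbf{a}}^{(\alpha)})^{T}\mathcal{O}^{(\alpha)}\tilde{\mathbf{b}}^{(\alpha)}\Big| \le \sqrt{\sum_{\alpha=1}^{N}\|\tilde{\mathbf{a}}^{(\alpha)}\|^{2}}\sqrt{\sum_{\alpha=1}^{N}\|\tilde{\mathbf{b}}^{(\alpha)}\|^{2}}.
\end{equation*}
So the plan now rests on establishing the uniform bound $\sum_{\alpha=1}^{N}\|\tilde{\mathbf{a}}^{(\alpha)}\|^{2} \le (d\kappa-1)/d$ (and analogously for $\tilde{\mathbf{b}}^{(\alpha)}$). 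I would derive this from the defining MUM relation (\ref{MUM}): expanding $\sum_k (a^{(\alpha)}_k)^{2}$ via the orthonormal $G_{\alpha,k}$-basis and using $\sum_{\alpha=1}^{d+1}\sum_{k=1}^{d-1} G_{\alpha,k}\otimes G_{\alpha,k} = \mathbb{F}_{\mathrm{swap}} - \mathbb{I}_d\otimes\mathbb{I}_d/d$ (valid since the $G_{\alpha,k}$ form an orthonormal basis of traceless Hermitian operators) yields the exact identity $\sum_{\alpha=1}^{d+1}\|\tilde{\mathbf{a}}^{(\alpha)}\|^{2} = (d\kappa-1)/d$ for any pure $|\phi\rangle$. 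Since every partial set of MUMs extends to a maximal one and each $\|\tilde{\mathbf{a}}^{(\alpha)}\|^{2}\ge 0$, the partial sum is bounded by $(d\kappa-1)/d$. Plugging back gives $\langle\phi|\Phi[|\psi\rangle\langle\psi|]|\phi\rangle \ge 1/d - 1/d = 0$, which is the desired positivity.

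The main obstacle I anticipate is the cancellation in the second paragraph together with the identity $\sum_{\alpha=1}^{d+1}\|\tilde{\mathbf{a}}^{(\alpha)}\|^{2} = (d\kappa-1)/d$; the rest is algebra and Cauchy--Schwarz. Notably, the argument also makes clear \emph{why} the combination $(d\kappa-1-N+2L)$ is chosen: it is precisely the value that makes the bound saturate and the proof independent of $N$ and $L$.
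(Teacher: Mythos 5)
Your proof takes a genuinely different route from the paper's. The paper never argues pointwise positivity of $\Phi[P]$ directly: it computes the purity $\Tr(\Phi[P])^2$ and invokes the sufficient condition $\Tr(\Phi[P])^2\le 1/(d-1)$ for a unit-trace Hermitian operator on $\mathbb{C}^d$ to be positive semidefinite, thereby reducing everything to the index-of-coincidence bound $\sum_{\alpha=1}^N\sum_k[\Tr(P_k^{(\alpha)}P)]^2\le\frac{N-1}{d}+\kappa$ cited from the MUM literature. Your computation of $\langle\phi|\Phi[|\psi\rangle\langle\psi|]|\phi\rangle$ via the vectors $\mathbf{a}^{(\alpha)},\mathbf{b}^{(\alpha)}$, the splitting along and orthogonal to $\mathbf{n}_\ast$ (using that $\mathcal{O}^{(\alpha)}$ and $\mathcal{O}^{(\alpha)T}$ both fix $\mathbf{n}_\ast$), the cancellation $a+N-2L=d\kappa-1$, and the double Cauchy--Schwarz step are all correct, and your key bound $\sum_{\alpha=1}^N\|\tilde{\mathbf{a}}^{(\alpha)}\|^2\le(d\kappa-1)/d$ is exactly equivalent to the paper's cited inequality (subtract $N/d$ from both sides). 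Your argument is arguably more transparent: it yields positivity directly, without the intermediate purity lemma, and it makes visible why the coefficient $d\kappa-1-N+2L$ is the right normalization.

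The one genuine gap is how you justify that key bound for $N<d+1$: the claim that ``every partial set of MUMs extends to a maximal one'' cannot be assumed. Even for MUBs ($\kappa=1$) extendibility of a partial set to a complete set is known to fail in general (unextendible sets of MUBs exist), and nothing in definition (\ref{MUM}) guarantees that $N$ given MUMs with parameter $\kappa$ sit inside a complete set of $d+1$ MUMs with the same $\kappa$. Fortunately the inequality needs no extension argument: for $\alpha\neq\beta$ the traceless operators $T_k^{(\alpha)}=P_k^{(\alpha)}-\mathbb{I}_d/d$ satisfy $\Tr(T_k^{(\alpha)}T_l^{(\beta)})=0$ by (\ref{MUM}), so the real spans $V_\alpha=\mathrm{span}\{T_k^{(\alpha)}\}_k$ are mutually orthogonal subspaces of the traceless Hermitian operators; within each $V_\alpha$ the Gram matrix of the $T_k^{(\alpha)}$ equals $\frac{d\kappa-1}{d-1}(\mathbb{I}-J/d)$ ($J$ the all-ones matrix), whose nonzero eigenvalues all equal $\frac{d\kappa-1}{d-1}$, whence $\|\tilde{\mathbf{a}}^{(\alpha)}\|^2\le\frac{d\kappa-1}{d-1}\,\|\Pi_{V_\alpha}X\|_{\mathrm{HS}}^2$ with $X=|\phi\rangle\langle\phi|-\mathbb{I}_d/d$. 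Summing over $\alpha$ and using $\|X\|_{\mathrm{HS}}^2=(d-1)/d$ gives $\sum_{\alpha=1}^N\|\tilde{\mathbf{a}}^{(\alpha)}\|^2\le(d\kappa-1)/d$ for any $N$; alternatively, simply cite the inequality from \cite{Rastegin,ChenFei} as the paper does. With that repair your proof is complete, and (like the paper's) it nowhere uses $P_k^{(\alpha)}\ge 0$, only $\sum_kP_k^{(\alpha)}=\mathbb{I}_d$ and the trace relations.
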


\begin{proof}
Take an arbitrary rank-1 projector $P$. We prove that
\begin{equation}\label{pos}
\Tr(\Phi[P])^2\leq\frac{1}{d-1},
\end{equation}
which is a sufficient positivity condition for $\Phi$ \cite{MUBs}.
For simplicity, we consider the map
\begin{equation}
\widetilde{\Phi}=a\Phi_0+\sum_{\alpha=L+1}^N\Phi_\alpha-\sum_{\alpha=1}^L\Phi_\alpha,
\end{equation}
where $a=d\kappa-1-N+2L$ and $\widetilde{\Phi}=(d\kappa-1)\Phi$. Now, let us calculate
\begin{equation}\label{PTP1}
\begin{split}
\Tr(\widetilde{\Phi}[P])^2=\Tr\Bigg\{&a^2\Phi_0[P]^2+\sum_{\alpha,\beta=1}^L
\Phi_\alpha[P]\Phi_\beta[P]+\sum_{\alpha,\beta=L+1}^N\Phi_\alpha[P]\Phi_\beta[P]
+2a\sum_{\alpha=L+1}^N\Phi_0[P]\Phi_\alpha[P]\\&-2a\sum_{\alpha=1}^L\Phi_0[P]\Phi_\alpha[P]
-2\sum_{\alpha=L+1}^N\sum_{\beta=1}^L\Phi_\alpha[P]\Phi_\beta[P]\Bigg\}.
\end{split}
\end{equation}
Observe that the subsequent terms can be simplified as follows,
\begin{equation}
\Tr(\Phi_0[P]^2)=\Tr(\Phi_0[P]\Phi_\alpha[P])=\Tr(\Phi_\alpha[P]\Phi_\beta[P])=\frac 1d,\quad \alpha\neq\beta,
\end{equation}
and
\begin{equation}
\Tr(\Phi_\alpha[P]^2)=\frac{1-\kappa}{d-1}+\frac{d\kappa-1}{d-1}\sum_{m=0}^{d-1}
\left[\Tr(P_m^{(\alpha)}P)\right]^2,
\end{equation}
where we used the trace properties of MUMs from eq. (\ref{MUM}), as well as the properties of the orthogonal rotation matrices,
\begin{equation}
\sum_{k=0}^{d-1}\mathcal{O}^{(\alpha)}_{kl}=\sum_{l=0}^{d-1}\mathcal{O}^{(\alpha)}_{kl}=1,\qquad
\sum_{k=0}^{d-1}\mathcal{O}^{(\alpha)}_{kl}\mathcal{O}^{(\alpha)}_{km}=\delta_{lm}.
\end{equation}
Hence, eq. (\ref{PTP1}) reduces to
\begin{equation}\label{PTP2}
\begin{split}
\Tr(\widetilde{\Phi}[P])^2=&\,\frac 1d \Big[a^2+(N-L)(N-L-1)+L(L-1)+2a(N-L)-2aL
-2L(N-L)\Big]\\&+\frac{1-\kappa}{d-1}N+\frac{d\kappa-1}{d-1}\sum_{\alpha=1}^N\sum_{m=0}^{d-1}
\left[\Tr(P_m^{(\alpha)}P)\right]^2,\\
&=\frac 1d \Big[(a+N-2L)^2-N\Big]+\frac{1-\kappa}{d-1}N+\frac{d\kappa-1}{d-1}\sum_{\alpha=1}^N\sum_{m=0}^{d-1}
\left[\Tr(P_m^{(\alpha)}P)\right]^2
\end{split}
\end{equation}
The mutually unbiased measurements satisfy the following property \cite{Rastegin,ChenFei},
\begin{equation}
\sum_{\alpha=1}^N\sum_{k=0}^{d-1}\left[\Tr\left(P_k^{(\alpha)}P\right)\right]^2
\leq \frac{N-1}{d}+\kappa.
\end{equation}
Applying this inequality to eq. (\ref{PTP2}), together with the definition of $a$, results in
\begin{equation}
\begin{split}
\Tr(\widetilde{\Phi}[P])^2\leq\frac{(d\kappa-1)^2-N}{d} +\frac{1-\kappa}{d-1}N+\frac{d\kappa-1}{d-1}\left(\frac{N-1}{d}+\kappa\right)=
\frac{(d\kappa-1)^2}{d-1},
\end{split}
\end{equation}
which finally proves that condition (\ref{pos}) holds.
\end{proof}

\begin{Remark}
In the proof to Proposition 1, out of all the defining properties of MUMs, the positivity condition $P_k^{(\alpha)}\geq 0$ is the only one that is never used. Hence, one can take $P_k^{(\alpha)}\ngeq 0$ to construct $\Phi$ using eq. (\ref{PTP}), and this map is still positive. In other words, any operators $P_k^{(\alpha)}$ that sum up to the identity and satisfy eq. (\ref{MUM}) for an arbitrary real parameter $\kappa$ give rise to a positive, trace-preserving map $\Phi$.
\end{Remark}

Note that the map $\Phi$ generalizes several positive maps already known in the literature:
\begin{itemize}
\item when no inversions are present ($L=N$), one recovers the map considered in ref. \cite{Li};
\item for $L=N$ and $\kappa=1$, $\Phi$ reduces to the map constructed from MUBs \cite{MUBs};
\item if $L=N=d+1$ and there are no rotations ($\mathcal{O}^{(\alpha)}=\mathbb{I}_d$), one arrives at the maps of the type analyzed in ref. \cite{MUM_GPC};
\item if $L=N=d+1$, $\mathcal{O}^{(\alpha)}=\mathbb{I}_d$, and $\kappa=1$, we obtain the generalized Pauli map \cite{ICQC}.
\end{itemize}

Positive maps find important applications in the theory of quantum entanglement, where they are used to detect entangled (non-separable) states.
From definition in eq. (\ref{W}), we find that the entanglement witness corresponding to the positive map $\Phi$ reads
\begin{equation}\label{W2}
W=\frac{d\kappa-1-N+2L}{d}\mathbb{I}_{d^2}
+\sum_{\alpha=L+1}^NH_\alpha-\sum_{\alpha=1}^LH_\alpha,
\end{equation}
where
\begin{equation}
H_\alpha=\sum_{k,l=0}^{d-1}\mathcal{O}_{kl}^{(\alpha)}
\overline{P}_l^{(\alpha)}\otimes P_k^{(\alpha)}.
\end{equation}
Recall that in any dimension $d$ one can always construct the maximal set of $d+1$ MUMs using an orthonormal basis $\{\mathbb{I}_d/\sqrt{d},G_{\alpha,k}\}$ of traceless Hermitian operators $G_{\alpha,k}$. Hence, whenever one knows the full set of $d+1$ mutually unbiased measurements, the entanglement witness is equivalently given by
\begin{equation}\label{WW}
\widetilde{W}=\frac{d(d-1)(\sqrt{d}+1)^2}{d\kappa-1}W=(d-1)(\sqrt{d}+1)^2
\mathbb{I}_{d^2}+\sum_{\alpha=L+1}^NJ_\alpha-\sum_{\alpha=1}^LJ_\alpha,
\end{equation}
with
\begin{equation}\label{J}
J_\alpha=\sum_{k,l=0}^{d-1}\mathcal{O}_{kl}^{(\alpha)}
\overline{F}_l^{(\alpha)}\otimes F_k^{(\alpha)}.
\end{equation}
In the above equation, we used the one-to-one correspondence between $P_k^{(\alpha)}$ and $G_{\alpha,k}$ found by Kalev and Gour \cite{Kalev} to reverse engineer the Hermitian basis from a known complete set of MUMs.
Observe that there is now no dependence of the witness $\widetilde{W}$ on the parameter $\kappa$. This is due to the fact that $\kappa$ characterizes mutually unbiased measurements and not operator bases.

Now, let us propose several examples of entanglement witnesses that fall into the category established by $\widetilde{W}$.

\begin{Example}
First, let us take the maximal values for $N=L=d+1$. Also, assume that there are no rotations, so that $\mathcal{O}^{(\alpha)}=\mathbb{I}_d$ for $\alpha=1,\ldots,d+1$. Finally, fix the operator basis $G_{\alpha,k}$ to be the Gell-Mann matrices (see Appendix A). In this case, we have
\begin{equation}
\sum_{\alpha=1}^dJ_\alpha=d(\sqrt{d}+1)^2\left[\sum_{k,m=0}^{d-1}|k\>\<m|\otimes|k\>\<m|
-\sum_{k=0}^{d-1}|k\>\<k|\otimes|k\>\<k|\right],
\end{equation}
\begin{equation}
J_{d+1}=(\sqrt{d}+1)^2\left[d\sum_{k=0}^{d-1}|k\>\<k|\otimes|k\>\<k|-\mathbb{I}_{d^2}\right].
\end{equation}
This allows us to write the formula for $\widetilde{W}$ from eq. (\ref{WW}) in the form
\begin{equation}
\widetilde{W}=d(\sqrt{d}+1)^2\left[\mathbb{I}_{d^2}-dP_+\right],
\end{equation}
where $P_+=\frac 1d \sum_{i,j=0}^{d-1}|i\>\<j|\otimes|i\>\<j|$ is the maximally entangled state. This is exactly the entanglement witness corresponding to the reduction map \cite{EW_reduction}.
\end{Example}

\begin{Example}\label{Ex2}
Now, consider the dimension $d=3$ and take, as in the previous example, $N=L=4$ as well as $\mathcal{O}^{(\alpha)}=\mathbb{I}_3$ for $\alpha=1,2,3$. However, assume that the final rotation matrix $\mathcal{O}^{(4)}=S_k$ describes a permutation, where
\begin{equation}
S_1=\begin{pmatrix}
0 & 0 & 1 \\
1 & 0 & 0 \\
0 & 1 & 0
\end{pmatrix},\qquad
S_2=\begin{pmatrix}
0 & 1 & 0 \\
0 & 0 & 1 \\
1 & 0 & 0
\end{pmatrix}.
\end{equation}
Denote the witness corresponding to the choice $\mathcal{O}^{(4)}=S_k$ by $\widetilde{W}_k$. For the Gell-Mann matrices, one finds
\begin{equation}
\widetilde{W}_1=6(2+\sqrt{3})\left[\begin{array}{c c c|c c c|c c c}
1 & \cdot & \cdot & \cdot & -1 & \cdot & \cdot & \cdot & -1 \\
\cdot & \cdot & \cdot & \cdot & \cdot & \cdot & \cdot & \cdot & \cdot \\
\cdot & \cdot & 1 & \cdot & \cdot & \cdot & \cdot & \cdot & \cdot \\
\hline
\cdot & \cdot & \cdot & 1 & \cdot & \cdot & \cdot & \cdot & \cdot \\
-1 & \cdot & \cdot & \cdot & 1 & \cdot & \cdot & \cdot & -1 \\
\cdot & \cdot & \cdot & \cdot & \cdot & \cdot & \cdot & \cdot & \cdot \\
\hline
\cdot & \cdot & \cdot & \cdot & \cdot & \cdot & \cdot & \cdot & \cdot \\
\cdot & \cdot & \cdot & \cdot & \cdot & \cdot & \cdot & 1 & \cdot \\
-1 & \cdot & \cdot & \cdot & -1 & \cdot & \cdot & \cdot & 1
\end{array}\right],\qquad
\widetilde{W}_2=6(2+\sqrt{3})\left[\begin{array}{c c c|c c c|c c c}
1 & \cdot & \cdot & \cdot & -1 & \cdot & \cdot & \cdot & -1 \\
\cdot & 1 & \cdot & \cdot & \cdot & \cdot & \cdot & \cdot & \cdot \\
\cdot & \cdot & \cdot & \cdot & \cdot & \cdot & \cdot & \cdot & \cdot \\
\hline
\cdot & \cdot & \cdot & \cdot & \cdot & \cdot & \cdot & \cdot & \cdot \\
-1 & \cdot & \cdot & \cdot & 1 & \cdot & \cdot & \cdot & -1 \\
\cdot & \cdot & \cdot & \cdot & \cdot & 1 & \cdot & \cdot & \cdot \\
\hline
\cdot & \cdot & \cdot & \cdot & \cdot & \cdot & 1 & \cdot & \cdot \\
\cdot & \cdot & \cdot & \cdot & \cdot & \cdot & \cdot & \cdot & \cdot \\
-1 & \cdot & \cdot & \cdot & -1 & \cdot & \cdot & \cdot & 1
\end{array}\right],
\end{equation}
which belong to the Choi-type maps analyzed in ref. \cite{HaKye2}. For clarity, all zeros are represented by dots.
The same witnesses can also be obtained from the mutually unbiased bases \cite{MUBs}.
\end{Example}

Unfortunately, these simple forms of entanglement witnesses constructed from the Gell-Mann matrices are not preserved for $d>3$. However, this can be remedied if one modifies the diagonal matrices in the Gell-Mann basis.

\begin{Example}\label{EX}
In what follows, we generalize Example \ref{Ex2} to an arbitrary finite dimension $d$. Once again, we take the maximal $N=L=d+1$, $\mathcal{O}^{(\alpha)}=\mathbb{I}_d$ for $\alpha=1,\ldots,d$, and the permutation matrix $\mathcal{O}^{(d+1)}=S^{(r)}$, $S^{(r)}|i\>=|i+r\>$. Instead of the Gell-Mann matrices, in the construction of the entanglement witness, we use the operator basis introduced in Appendix B. Due to this change, $J_{d+1}$ can be simplified to
\begin{equation}
J_{d+1}=(\sqrt{d}+1)^2\left[d\sum_{k=0}^{d-1}|k-r\>\<k-r|\otimes|k\>\<k|-\mathbb{I}_{d^2}\right].
\end{equation}
Now, the associated witness $\widetilde{W}$ is given by
\begin{equation}
\widetilde{W}=d(\sqrt{d}+1)^2\left[\mathbb{I}_{d^2}-dR\right],
\end{equation}
where
\begin{equation}
R=\frac 1d \left[\sum_{k,m=0}^{d-1}|k\>\<m|\otimes|k\>\<m|-\sum_{k=0}^{d-1}
\big(|k\>\<k|-|k-r\>\<k-r|\big)\otimes|k\>\<k|\right].
\end{equation}
\end{Example}

Interestingly, even though the parameter $\kappa$ that characterizes mutually unbiased measurements is not present in the formula for $\widetilde{W}$,
the properties of entanglement witnesses depend on the optimal (maximal) value of $\kappa$. Indeed, for the MUMs constructed from the Gell-Mann matrices with $\kappa_{\rm opt}=\frac{d+2}{d^2}$, there are less indecomposable witnesses than for the MUBs, where $\kappa_{\rm opt}=1$.

\begin{Example}
In dimension $d=3$, assuming the maximal value of $N=4$ and no rotations ($\mathcal{O}^{(\alpha)}=\mathbb{I}_3$ for $\alpha=1,\ldots,4$), there are five indecomposable witnesses that can be constructed from the mutually unbiased bases. Some examples for $L=2$ are
\begin{equation}
\widetilde{W}_1=2(2+\sqrt{3})
\left[\begin{array}{c c c|c c c|c c c}
\cdot & \cdot & \cdot & \cdot & 1 & \cdot & \cdot & \cdot & 1 \\
\cdot & 3 & \cdot & \cdot & \cdot & -2 & -2 & \cdot & \cdot \\
\cdot & \cdot & 3 & -2 & \cdot & \cdot & \cdot & -2 & \cdot \\
\hline
\cdot & \cdot & -2 & 3 & \cdot & \cdot & \cdot & -2 & \cdot \\
1 & \cdot & \cdot & \cdot & \cdot & \cdot & \cdot & \cdot & 1 \\
\cdot & -2 & \cdot & \cdot & \cdot & 3 & -2 & \cdot & \cdot \\
\hline
\cdot & -2 & \cdot & \cdot & \cdot & -2 & 3 & \cdot & \cdot \\
\cdot & \cdot & -2 & -2 & \cdot & \cdot & \cdot & 3 & \cdot \\
1 & \cdot & \cdot & \cdot & 1 & \cdot & \cdot & \cdot & \cdot
\end{array}\right],
\qquad
\widetilde{W}_2=2(2+\sqrt{3})
\left[\begin{array}{c c c|c c c|c c c}
4 & \cdot & \cdot & \cdot & -1 & \cdot & \cdot & \cdot & -1 \\
\cdot & 1 & \cdot & \cdot & \cdot & 2 & 2 & \cdot & \cdot \\
\cdot & \cdot & 1 & 2 & \cdot & \cdot & \cdot & 2 & \cdot \\
\hline
\cdot & \cdot & 2 & 1 & \cdot & \cdot & \cdot & 2 & \cdot \\
-1 & \cdot & \cdot & \cdot & 4 & \cdot & \cdot & \cdot & -1 \\
\cdot & 2 & \cdot & \cdot & \cdot & 1 & 2 & \cdot & \cdot \\
\hline
\cdot & 2 & \cdot & \cdot & \cdot & 2 & 1 & \cdot & \cdot \\
\cdot & \cdot & 2 & 2 & \cdot & \cdot & \cdot & 1 & \cdot \\
-1 & \cdot & \cdot & \cdot & -1 & \cdot & \cdot & \cdot & 4
\end{array}\right],
\end{equation}
which can be used to detect the positive partial transpose (PPT) states
\begin{equation}
\rho_1=\frac{1}{24}
\left[\begin{array}{c c c|c c c|c c c}
4 & \cdot & \cdot & \cdot & 1 & \cdot & \cdot & \cdot & 1 \\
\cdot & 2 & \cdot & \cdot & \cdot & 2 & 2 & \cdot & \cdot \\
\cdot & \cdot & 2 & 2 & \cdot & \cdot & \cdot & 2 & \cdot \\
\hline
\cdot & \cdot & 2 & 2 & \cdot & \cdot & \cdot & 2 & \cdot \\
1 & \cdot & \cdot & \cdot & 4 & \cdot & \cdot & \cdot & 1 \\
\cdot & 2 & \cdot & \cdot & \cdot & 2 & 2 & \cdot & \cdot \\
\hline
\cdot & 2 & \cdot & \cdot & \cdot & 2 & 2 & \cdot & \cdot \\
\cdot & \cdot & 2 & 2 & \cdot & \cdot & \cdot & 2 & \cdot \\
1 & \cdot & \cdot & \cdot & 1 & \cdot & \cdot & \cdot & 4
\end{array}\right],\quad
\rho_2=\frac{1}{3(3+\sqrt{3})}
\left[\begin{array}{c c c|c c c|c c c}
\sqrt{3}-1 & \cdot & \cdot & \cdot & \sqrt{3}-1 & \cdot & \cdot & \cdot & \sqrt{3}-1 \\
\cdot & 2 & \cdot & \cdot & \cdot & -1 & -1 & \cdot & \cdot \\
\cdot & \cdot & 2 & -1 & \cdot & \cdot & \cdot & -1 & \cdot \\
\hline
\cdot & \cdot & -1 & 2 & \cdot & \cdot & \cdot & -1 & \cdot \\
\sqrt{3}-1 & \cdot & \cdot & \cdot & \sqrt{3}-1 & \cdot & \cdot & \cdot & \sqrt{3}-1 \\
\cdot & -1 & \cdot & \cdot & \cdot & 2 & -1 & \cdot & \cdot \\
\hline
\cdot & -1 & \cdot & \cdot & \cdot & -1 & 2 & \cdot & \cdot \\
\cdot & \cdot & -1 & -1 & \cdot & \cdot & \cdot & 2 & \cdot \\
\sqrt{3}-1 & \cdot & \cdot & \cdot & \sqrt{3}-1 & \cdot & \cdot & \cdot & \sqrt{3}-1
\end{array}\right],
\end{equation}
respectively.

On the contrary, all witnesses that arise from the Gell-Mann basis are decomposable, including
\begin{equation}
\widetilde{W}_3=6(2+\sqrt{3})
\left[\begin{array}{c c c|c c c|c c c}
\cdot & \cdot & \cdot & \cdot & \cdot & \cdot & \cdot & \cdot & 1 \\
\cdot & 1 & \cdot & -1 & \cdot & \cdot & \cdot & \cdot & \cdot \\
\cdot & \cdot & 1 & \cdot & \cdot & \cdot & \cdot & \cdot & \cdot \\
\hline
\cdot & -1 & \cdot & 1 & \cdot & \cdot & \cdot & \cdot & \cdot \\
\cdot & \cdot & \cdot & \cdot & \cdot & \cdot & \cdot & \cdot & \cdot \\
\cdot & \cdot & \cdot & \cdot & \cdot & 1 & \cdot & 1 & \cdot \\
\hline
\cdot & \cdot & \cdot & \cdot & \cdot & \cdot & 1 & \cdot & \cdot \\
\cdot & \cdot & \cdot & \cdot & \cdot & 1 & \cdot & 1 & \cdot \\
1 & \cdot & \cdot & \cdot & \cdot & \cdot & \cdot & \cdot & \cdot
\end{array}\right],
\qquad
\widetilde{W}_4=2(2+\sqrt{3})
\left[\begin{array}{c c c|c c c|c c c}
4 & \cdot & \cdot & \cdot & -3 & \cdot & \cdot & \cdot & \cdot \\
\cdot & 1 & \cdot & \cdot & \cdot & \cdot & \cdot & \cdot & \cdot \\
\cdot & \cdot & 1 & \cdot & \cdot & \cdot & 3 & \cdot & \cdot \\
\hline
\cdot & \cdot & \cdot & 1 & \cdot & \cdot & \cdot & \cdot & \cdot \\
-3 & \cdot & \cdot & \cdot & 4 & \cdot & \cdot & \cdot & \cdot \\
\cdot & \cdot & \cdot & \cdot & \cdot & 1 & \cdot & 3 & \cdot \\
\hline
\cdot & \cdot & 3 & \cdot & \cdot & \cdot & 1 & \cdot & \cdot \\
\cdot & \cdot & \cdot & \cdot & \cdot & 3 & \cdot & 1 & \cdot \\
\cdot & \cdot & \cdot & \cdot & \cdot & \cdot & \cdot & \cdot & 4
\end{array}\right].
\end{equation}
Recall that a witness $W$ is decomposable if it can be written as $W=A+B^\Gamma$, where $A,B\geq 0$ and $\Gamma$ denotes the partial transposition with respect to the second subsystem. In our example, $\widetilde{W}_3$ can be decomposed into $\widetilde{W}_3=A_3+B_3^\Gamma$ with the positive operators
\begin{equation}
A_3=6(2+\sqrt{3})
\left[\begin{array}{c c c|c c c|c c c}
\cdot & \cdot & \cdot & \cdot & \cdot & \cdot & \cdot & \cdot & \cdot \\
\cdot & 1 & \cdot & -1 & \cdot & \cdot & \cdot & \cdot & \cdot \\
\cdot & \cdot & \cdot & \cdot & \cdot & \cdot & \cdot & \cdot & \cdot \\
\hline
\cdot & -1 & \cdot & 1 & \cdot & \cdot & \cdot & \cdot & \cdot \\
\cdot & \cdot & \cdot & \cdot & \cdot & \cdot & \cdot & \cdot & \cdot \\
\cdot & \cdot & \cdot & \cdot & \cdot & 1 & \cdot & 1 & \cdot \\
\hline
\cdot & \cdot & \cdot & \cdot & \cdot & \cdot & \cdot & \cdot & \cdot \\
\cdot & \cdot & \cdot & \cdot & \cdot & 1 & \cdot & 1 & \cdot \\
\cdot & \cdot & \cdot & \cdot & \cdot & \cdot & \cdot & \cdot & \cdot
\end{array}\right],\qquad
B_3=6(2+\sqrt{3})
\left[\begin{array}{c c c|c c c|c c c}
\cdot & \cdot & \cdot & \cdot & \cdot & \cdot & \cdot & \cdot & \cdot \\
\cdot & \cdot & \cdot & \cdot & \cdot & \cdot & \cdot & \cdot & \cdot \\
\cdot & \cdot & 1 & \cdot & \cdot & \cdot & 1 & \cdot & \cdot \\
\hline
\cdot & \cdot & \cdot & \cdot & \cdot & \cdot & \cdot & \cdot & \cdot \\
\cdot & \cdot & \cdot & \cdot & \cdot & \cdot & \cdot & \cdot & \cdot \\
\cdot & \cdot & \cdot & \cdot & \cdot & \cdot & \cdot & \cdot & \cdot \\
\hline
\cdot & \cdot & 1 & \cdot & \cdot & \cdot & 1 & \cdot & \cdot \\
\cdot & \cdot & \cdot & \cdot & \cdot & \cdot & \cdot & \cdot & \cdot \\
\cdot & \cdot & \cdot & \cdot & \cdot & \cdot & \cdot & \cdot & \cdot
\end{array}\right],
\end{equation}
whereas $\widetilde{W}_4$ is decomposable into $\widetilde{W}_4=A_4+B_4^\Gamma$
with positive
\begin{equation}
A_4=2(2+\sqrt{3})
\left[\begin{array}{c c c|c c c|c c c}
\cdot & \cdot & \cdot & \cdot & \cdot & \cdot & \cdot & \cdot & \cdot \\
\cdot & 1 & \cdot & -1 & \cdot & \cdot & \cdot & \cdot & \cdot \\
\cdot & \cdot & 1 & \cdot & \cdot & \cdot & 1 & \cdot & \cdot \\
\hline
\cdot & -1 & \cdot & 1 & \cdot & \cdot & \cdot & \cdot & \cdot \\
\cdot & \cdot & \cdot & \cdot & \cdot & \cdot & \cdot & \cdot & \cdot \\
\cdot & \cdot & \cdot & \cdot & \cdot & 1 & \cdot & 1 & \cdot \\
\hline
\cdot & \cdot & 1 & \cdot & \cdot & \cdot & 1 & \cdot & \cdot \\
\cdot & \cdot & \cdot & \cdot & \cdot & 1 & \cdot & 1 & \cdot \\
\cdot & \cdot & \cdot & \cdot & \cdot & \cdot & \cdot & \cdot & \cdot
\end{array}\right],\qquad
B_4=2(2+\sqrt{3})
\left[\begin{array}{c c c|c c c|c c c}
4 & \cdot & \cdot & \cdot & -2 & \cdot & \cdot & \cdot & 2 \\
\cdot & \cdot & \cdot & \cdot & \cdot & \cdot & \cdot & \cdot & \cdot \\
\cdot & \cdot & \cdot & \cdot & \cdot & \cdot & \cdot & \cdot & \cdot \\
\hline
\cdot & \cdot & \cdot & \cdot & \cdot & \cdot & \cdot & \cdot & \cdot \\
-2 & \cdot & \cdot & \cdot & 4 & \cdot & \cdot & \cdot & 2 \\
\cdot & \cdot & \cdot & \cdot & \cdot & \cdot & \cdot & \cdot & \cdot \\
\hline
\cdot & \cdot & \cdot & \cdot & \cdot & \cdot & \cdot & \cdot & \cdot \\
\cdot & \cdot & \cdot & \cdot & \cdot & \cdot & \cdot & \cdot & \cdot \\
2 & \cdot & \cdot & \cdot & 2 & \cdot & \cdot & \cdot & 4
\end{array}\right].
\end{equation}
\end{Example}

Now, observe that $J_\alpha$ from eq. (\ref{J}) can be expressed directly through the elements of the operator basis $G_{\alpha,k}$, as the operators $F_k^{(\alpha)}$ depend directly on $G_{\alpha,k}$. Indeed, after writing out $F_k^{(\alpha)}$ using eq. (\ref{f}), one arrives at
\begin{equation}
J_\alpha=\sum_{k,l=1}^{d-1}\mathcal{Q}_{kl}^{(\alpha)}\overline{G}_{\alpha,l}
\otimes G_{\alpha,k}
\end{equation}
with
\begin{equation}
\mathcal{Q}_{kl}^{(\alpha)} = d(\mathcal{O}_{00}^{(\alpha)}-1) + d(\sqrt{d}+1)^2 \mathcal{O}_{kl}^{(\alpha)} - d(\sqrt{d}+1) (\mathcal{O}_{0l}^{(\alpha)}+\mathcal{O}_{k0}^{(\alpha)}).
\end{equation}
In the above formula, $\mathcal{Q}^{(\alpha)}$ are rescaled orthogonal matrices since $\mathcal{Q}^{(\alpha)T}\mathcal{Q}^{(\alpha)}
=\mathcal{Q}^{(\alpha)}\mathcal{Q}^{(\alpha)T}=d^2(\sqrt{d}+1)^4\mathbb{I}_{d-1}$. Assume for now that $N=L=d+1$. Then, the corresponding entanglement witness $\widetilde{W}$ is given by
\begin{equation}
\widetilde{W}=d(\sqrt{d}+1)^2\left[\mathbb{I}_{d^2}- d(\sqrt{d}+1)^2 G_0\otimes G_0-\frac{1}{d(\sqrt{d}+1)^2}\sum_{\alpha=1}^{d+1}J_\alpha\right],
\end{equation}
where $G_0=\mathbb{I}_d/\sqrt{d}$. After a simple relabelling of indices, $(\alpha,k)\longmapsto\mu$, it follows that
\begin{equation}
\widetilde{W}=d(\sqrt{d}+1)^2\left[\mathbb{I}_{d^2}-\sum_{\mu,\nu=0}^{d^2-1}Q_{\mu\nu} G_\mu^T\otimes G_\nu\right]
\end{equation}
with a block-diagonal orthogonal matrix
\begin{equation}
Q=\frac{1}{d(\sqrt{d}+1)^2}
\left[\begin{array}{c c c c c}
d(\sqrt{d}+1)^2 & & & & \\
 & \mathcal{Q}^{(1)} & & & \\
 & & \mathcal{Q}^{(2)} & & \\
 & & & \ddots & \\
 & & & & \mathcal{Q}^{(d+1)}
\end{array}\right].
\end{equation}
Therefore, the entanglement witnesses constructed from $d+1$ MUMs belong to a larger class of witnesses \cite{YuLiu}
\begin{equation}\label{WW2}
W^\prime=\mathbb{I}_{d^2}-\sum_{\mu,\nu=0}^{d^2-1}Q_{\mu\nu} G_\mu^T\otimes G_\nu
\end{equation}
defined with the use of any orthonormal Hermitian bases $G_\mu$ and an orthogonal matrix $Q_{\mu\nu}$. Actually, it is enough that $Q^TQ\leq\mathbb{I}_{d^2}$, so the above consideration is also true for both $N\leq d+1$ and $L\leq d+1$ if one allows for the rotation matrices $\widetilde{\mathcal{O}}^{(\alpha)}$ to change the sign of $\mathbf{n}_\ast$ ($\widetilde{\mathcal{O}}^{(\alpha)}\mathbf{n}_\ast=\pm\mathbf{n}_\ast$).

\section{Conclusions}

In this paper, we constructed a family of positive, trace-preserving maps using $N\leq d+1$ mutually unbiased measurements and orthogonal matrices. We showed that these maps give rise to entanglement witnesses regardless of the value of the parameter $\kappa$, which means that the positivity of mutually unbiased measurements is not required. We provided several interesting examples of witnesses for $d=3$ as well as an arbitrary finite dimension using the construction of MUMs from two different orthonormal Hermitian bases. We also showed that there is a relation between indecomposability of witnesses and the optimal value of $\kappa$. At last, we proved that our construction belongs to the family of witnesses based on the CCNR separability criterion  \cite{YuLiu}, where the Hermitian basis consists in the identity and traceless operators, and the orthogonal matrix is block-diagonal. It would be interesting to provide a multipartite generalization of this construction. 


\section{Acknowledgements}

This paper was supported by the Polish National Science Centre project No. 2018/30/A/ST2/00837. K.S. was also supported by the Foundation for Polish Science (FNP).

\appendix

\section{Gell-Mann matrices}

Let us take the generalized Gell-Mann matrices defined on $\mathcal{H}\simeq\mathbb{C}^d$ via
\begin{align}
&\sigma_{kl}:=\frac{1}{\sqrt{2}}\left(|k\>\<l|+|l\>\<k|\right),\\
&\sigma_{lk}:=\frac{i}{\sqrt{2}}\left(|k\>\<l|-|l\>\<k|\right),\\
&\sigma_{kk}:=\sqrt{\frac{1}{k(k+1)}}\left(\sum_{j=0}^{k-1}|j\>\<j|-k|k\>\<k|\right),
\end{align}
where $0\leq k<l\leq d-1$ and $0\leq k\leq d-1$, respectively.
To construct the mutually unbiased measurements, we group them as in \cite{Kalev},
\begin{align*}
&\{G_{\alpha,k}|k=1,\ldots,d-1\}=\{\sigma_{k,\alpha-1}|k\neq\alpha-1\},\\
&\{G_{d+1,k}|k=1,\ldots,d-1\}=\{\sigma_{kk}|k=1,\ldots,d-1\}.
\end{align*}
Then, one has \cite{MUM_GPC}
\begin{align}
&F_0^{(d+1)}=(\sqrt{d}+1)\sum_{l=1}^{d-1}\sigma_{ll},\label{od}\\
&F_k^{(d+1)}=-\sqrt{d}(\sqrt{d}+1)\sigma_{kk}+\sum_{l=1}^{d-1}\sigma_{ll}
\end{align}
for $k=1,\ldots,d-1$, as well as
\begin{align}
&F_k^{(\alpha)}=-\sqrt{d}(\sqrt{d}+1)\sigma_{k,\alpha-1}+\sum_{l\neq\alpha-1}\sigma_{l,\alpha-1},\\
&F_{\alpha-1}^{(\alpha)}=(\sqrt{d}+1)\sum_{l\neq\alpha-1}\sigma_{l,\alpha-1}\label{do}
\end{align}
for $k\neq\alpha-1$ and $\alpha=1,\ldots,d$.

\section{A New Hermitian basis}

For the purposes of Example \ref{EX}, we introduce a new Hermitian operator basis $\sigma_{kl}^\prime$. Assume that the operators with only off-diagonal elements are the same as the Gell-Mann matrices; that is,
\begin{equation}
\sigma_{kl}^\prime=\sigma_{kl},\qquad k\neq l.
\end{equation}
Now, the diagonal operators are defined as $\sigma_{00}^\prime=\mathbb{I}_d/\sqrt{d}$ and
\begin{equation}
\sigma_{kk}^\prime=\frac{1}{\sqrt{d}(\sqrt{d}+1)}\left(\mathbb{I}_d+\sqrt{d}|0\>\<0|\right)
-|k\>\<k|,\qquad k=1,\ldots,d-1.
\end{equation}
Obviously, $\sigma_{kk}^\prime$ are Hermitian, traceless, and together with $\sigma_{kl}$ they form an operator basis. To check that this basis is indeed orthonormal, it is enough to show that
\begin{equation}
\Tr \sigma_{kk}^\prime\sigma_{ll}^\prime=\frac{2d+2\sqrt{d}}{d(\sqrt{d}+1)^2}+
\delta_{kl}-\frac{d}{\sqrt{2}(\sqrt{d}+1)}=\delta_{kl}
\end{equation}
for $k,l=1,\ldots,d-1$, as well as
\begin{equation}
\Tr \sigma_{kk}^\prime\sigma_{lm}^\prime=0
\end{equation}
for $k=1,\ldots,d-1$ and $l,m=0,\ldots,d-1$, $l\neq m$.

\bibliography{C:/Users/cynda/OneDrive/Fizyka/bibliography}
\bibliographystyle{C:/Users/cynda/OneDrive/Fizyka/beztytulow2}


\end{document}